\providecommand{\tabularnewline}{\\}
\theoremstyle{plain}
\newtheorem{thm}{\protect\theoremname}
\theoremstyle{definition}
\newtheorem{defn}[thm]{\protect\definitionname}
\theoremstyle{plain}
\newtheorem{prop}[thm]{\protect\propositionname}
\newenvironment{proof}[1][\protect\proofname]{\par
\normalfont\topsep6\p@\@plus6\p@\relax
\trivlist
\itemindent\parindent
\item[\hskip\labelsep\scshape #1]\ignorespaces
}{%
\endtrivlist\@endpefalse
}
\providecommand{\proofname}{Proof}
\journal{Journal of Mathematical Psychology}
\providecommand{\definitionname}{Definition}
\providecommand{\propositionname}{Proposition}
\providecommand{\theoremname}{Theorem}
\begin{document}

\begin{frontmatter}{}

\title{Negative Probabilities and Contextuality}

\author{J. Acacio de Barros}

\address{School of Humanities and Liberal Studies, San Francisco State University,
San Francisco, CA, USA}

\author{Janne V. Kujala}

\address{Department of Mathematical Information Technology, University of
Jyväskylä, Jyväskylä, Finland}

\author{Gary Oas}

\address{Stanford Pre-Collegiate Studies, Stanford University, Stanford, CA,
USA}
\begin{abstract}
There has been a growing interest, both in physics and psychology,
in understanding contextuality in experimentally observed quantities.
Different approaches have been proposed to deal with contextual systems,
and a promising one is contextuality-by-default, put forth by Dzhafarov
and Kujala. The goal of this paper is to present a tutorial on a different
approach: negative probabilities. We do so by presenting the overall
theory of negative probabilities in a way that is consistent with
contextuality-by-default and by examining with this theory some simple
examples where contextuality appears, both in physics and psychology. \end{abstract}
\begin{keyword}
contextuality, extended probabilities, negative probabilities, quantum
cognition
\end{keyword}

\end{frontmatter}{}

\section{Introduction\label{sec:Introduction}}

In recent years there has been an increased interest in modeling psychological
experiments with the mathematical tools of Quantum Mechanics (QM)
\cite{busemeyer_quantum_2012}. The argument, already put forth by
Bohr, is that the principle of complementarity in QM is not unique
to physical events, but is also present in cognitive and social phenomena
\cite{holton_roots_1970}. Since complementarity is ubiquitous, it
should also be true that the Hilbert space formalism created by physicists
at the beginning of the 20th Century can be applied to describe mathematical
situations outside of physics. This line of thinking gave rise in
recent times to a thriving line of research known as \emph{Quantum
Interaction} and, more specifically in the context of psychology,
\emph{Quantum Cognition. }

At the core of complementarity is the idea that it is not possible,
in principle, to observe simultaneously certain characteristics of
a system. In physics, this is the case for the well-known wave/particle
duality: each experimental context determines which characteristic,
undulatory or corpuscular, is observed. Similarly, it was proposed
that complementarity in psychology appears when a subject has to deal
with situations that have different and incompatible contexts. The
key aspect of complementarity, for our purpose, is that of a dependency
on the context. Therefore, quantum mathematical models, and also quantum
cognition, are essentially descriptions of contextual observables. 

That the mathematical apparatus of QM is well suited to describe those
context-dependent observables found in physics is clear by the tremendous
success of this theory to not only describe the microscopic world
but also to predict surprising results. This success comes from the
fact that complementarity, with its prohibition of simultaneous observation
of certain quantities, implies an orthomodular lattice of propositions
pertaining to the observable events, instead of a classic Boolean
algebra of compatible observables. In a famous paper, Piron \cite{piron_axiomatique_1964}
proved that the orthomodular lattice of propositions have a representation
in terms of Hilbert spaces. Therefore, it stands to reason that Hilbert
spaces are a good candidate for modeling the probabilities of quantities
that may not be simultaneously observable. In other words, the mathematics
of QM is a well-suited extension of probability theory that offer
a way to model the probabilistic outcomes of contextual observables
\cite{blutner_quantum_2015}. 

However, the mathematical structure of QM does not come without a
price. First, it is not the most universal generalization of probabilities
for context-dependent systems. It is possible to imagine certain context-dependent
situations of interest to researchers outside of physics which the
Hilbert space formalism of QM fails to describe (see \cite{de_barros_beyond_2015}
for an example). Second, the quantum formalism predicts some results
that are not reasonable in, say, psychology. For instance, one important
result is the impossibility to clone an unknown quantum system, which
is related to the impossibility of superluminal signaling. There is
no analogue to this in psychology, and one should not expect the cloning
of ``cognitive states'' to be impossible in principle (for instance,
in principle, albeit not in practice, we could conceive of duplicating
all the neural states of a given brain, with their corresponding firings
and configurations). 

It is thus reasonable to ask what other ways of describing contextual
systems exist. This has been a matter of intense research in the past
few years, and in this paper we provide one possible tool: Negative
Probabilities (NP). Our purpose here is to lay out the main ideas
necessary to describe certain contextual systems with NP. To do so,
we organize this paper as follows. In Section \ref{sec:Contextuality},
we start with a definition of contextuality, in line with the recent
work of Dzhafarov and Kujala \cite{dzhafarov_contextuality_2014-1}.
In Section \ref{sec:Negative-Probabilities} we go into the mathematical
details of negative probabilities, and discuss possible interpretations.
Finally, in Section \ref{sec:Some-examples} we present some examples
and applications of NP.

\section{Contextuality\label{sec:Contextuality}}

To understand what we mean by contextuality, we need to lay down some
notation to describe it. Let us start with a formal definition of
probabilities, which will be useful later on, when we modify it to
allow for contextual systems. We follow Kolmogorov's axiomatic approach
based on set theory in general\cite{kolmogorov_foundations_1956},
but for the present paper, will only need to use finite probability
spaces.

A discrete probability space is determined by the triple $(\Omega,\mathcal{F},p)$,
where $\Omega$ is the set of elementary events, $\mathcal{F}$ is
the algebra of events (which can be taken as the powerset $2^{\omega}$
for our purposes), and $p:\mathcal{F}\to[0,1]$ is a function that
yields the probability of each event $S\in\mathcal{F}.$ The elementary
events define the most atomic outcomes of an experiment, and so the
probability of a general event $S$ is determined by the probabilities
of elementary events: $p(S)=\sum_{\omega\in S}p(\{\omega\}).$ 

It is important to note that one generally cannot observe the elementary
events directly. Let us explain what we mean with the well-known firefly
box \cite{foulis_half-century_1999}, which will be useful later on
when we introduce the concept of contextuality. Imagine we have a
box with a firefly inside it emitting light at random times. The box
is constructed such that its walls are translucent, but an observer
can only see one side of the box at a time (see Figure \ref{fig:Foulis's-box}).
\begin{figure}
\begin{centering}
\includegraphics[scale=0.15]{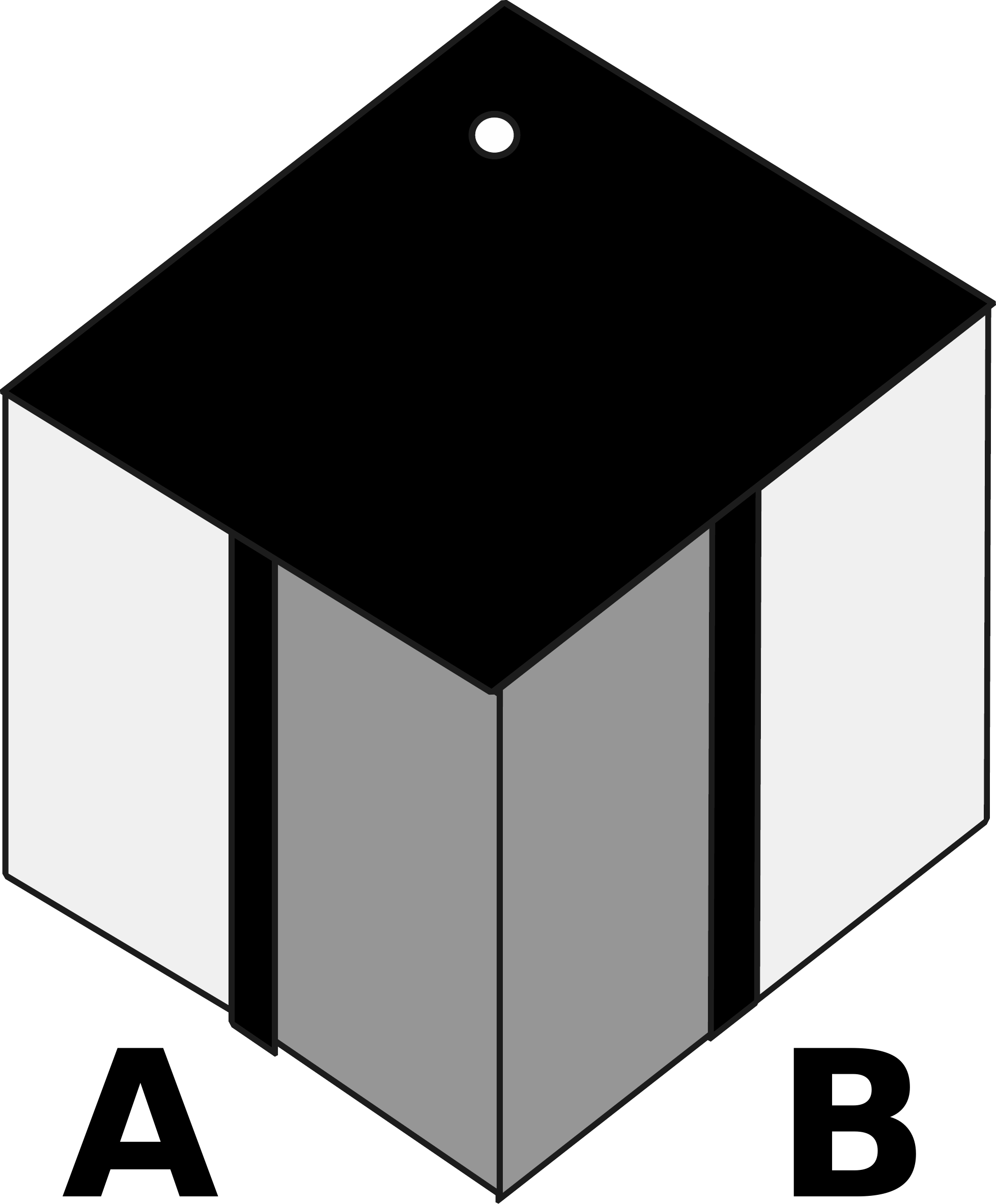}
\par\end{centering}

\protect\caption{\label{fig:Foulis's-box}A firefly inside the box, whose position
is represented on the horizontal plane by the white dot on top of
the box, shines its light at a certain instant of time. The box is
designed such that if the firefly is on the left hand side of A, then
only this side lights up, but the exact position of the insect on
this plane cannot be inferred (similarly for B). Due to experimental
constraints (also by design, we cannot look at both sides A and B
simultaneously), even though the actual position of the firefly is
given by, in the figure, left on A and right on B, we only know one
at a time, but do not know them jointly (i.e., knowing A is left does
not tell us what the value of B is, which in this case is either right
or left). }
\end{figure}
 A \emph{possible} $\Omega$ may be the set of all possible joint
values of A and B, namely $\left\{ RR,RL,LR,LL\right\} $, where $RR$
corresponds to the firefly lighting up the right side of A and right
side of B, $RL$ to right side of A and left of B, and so on\footnote{Here, for our purposes, we discard the possible state where the firefly
is not blinking.}. For this $\Omega$, the elementary event $RR$ is never actually
observed, since to observe it means seeing both sides of the box simultaneously,
which is forbidden by experimental design. 

In the firefly box, the set of elementary events could be even more
fine grained. For instance, it could be $\left\{ T_{R}T_{R},T_{R}T_{L},\ldots,B_{L}B_{R},B_{L}B_{L}\right\} $,
where, e.g., $T_{R}B_{L}$ corresponds to the firefly being on the
top of the right side of A and on the bottom of the left side on B.
Then, if one observed the firefly on the left of A, any of the following
elementary events might be result in this: $T_{L}T_{R}$, $T_{L}T_{L}$,$T_{L}B_{R}$,$T_{L}B_{L}$,
$B_{L}T_{R}$, $B_{L}T_{L}$,$B_{L}B_{R}$,$B_{L}B_{L}$. If we were
to compute the probability of the event ``left of A'' happens, we
would have to take the conjunction of all those elementary events,
which once again are not directly observable. 

The previous discussion motivates the idea of a random variable, a
very important tool in modeling experimental outcomes. Intuitively,
random variables (r.v.) are mathematical representations of outcomes
of an experiment which may be stochastic, such as the outcomes of
the firefly box, which are only ``left on A,'' ``right on A,''
``left on B,'' or ``right on B,'' abbreviated by $L_{A}$, $R_{A}$,
$L_{B}$, and $R_{B}$, respectively. Random variables model this
experiment in the following way. We start with a probability space,
whose elementary events in $\Omega$ are sampled according to $p$.
For this probability space, we choose functions $\mathbf{A}:\Omega\rightarrow\left\{ -1,1\right\} $
and $\mathbf{B}:\Omega\rightarrow\left\{ -1,1\right\} $ such that
for a random sampling of elementary events $\omega\in\Omega$ following
$p$, the probabilities of the outcomes $\mathbf{A}(\omega)=-1$,
$\mathbf{A}(\omega)=1$, $\mathbf{B}(\omega)=-1$, and $\mathbf{B}(\omega)=1$,
(which are given by respectively $p(\mathbf{A}=-1)$, $p(\mathbf{A}=1)$,
$p(\mathbf{B}=-1)$, and $p(\mathbf{B}=1)$), are the same as the
probabilities of observing ``left on $A$'', ``right on $A$'',
``left on $B$'', ``right on $B$'', respectively. In other words,
what random variables do is set a partition on $\Omega$ such that
each element of this partition (which is in $\mathcal{F}$) corresponds
to an outcome of the experiment with the same probabilistic features.
Thus, a discrete random variable is formally a function $\Omega\to E$
from the probability space to a certain set $E$ of possible values.
For our example above, the random variables $\mathbf{A}$ and $\mathbf{B}$
are $\pm1$-valued, with $E=\left\{ -1,1\right\} $. 

The \emph{expected value of a random variable} $\mathbf{R}$ or, for
short, the \emph{expectation of} $\mathbf{R}$, on a probability space
$\left(\Omega,\mathcal{F},p\right)$, denoted $E\left(\mathbf{R}\right)$,
is defined as
\[
E\left(\mathbf{R}\right)=\sum_{\omega\in\Omega}p\left(\{\omega\}\right)\mathbf{R}\left(\omega\right).
\]
For two random variables $\mathbf{R}$ and $\mathbf{S}$, their \emph{moment}
is defined as the expectation of their product, $E\left(\mathbf{RS}\right)$,
and for three random variables $\mathbf{R}$, $\mathbf{S}$, and $\mathbf{T}$,
their \emph{triple moment} is defined as the expectation of the triple
product, $E\left(\mathbf{RST}\right)$. Higher moments, are defined
in the same way, as product expectations of four or more random variables.

So, the question is whether we can create random variables $\mathbf{A}$
and $\mathbf{B}$ that model the firefly box. What we mean here is
whether there exists a probability space $\left(\Omega,\mathcal{F},p\right)$
and discrete random variables on this space such that all statistical
characteristics of the outcomes of observations of the box are the
same as the statistical characteristics of the random variables. For
example, if we observe ``left on A'' 50\% of the time, then it must
be the case that $E\left(\mathbf{A}\right)=0$ for a r.v. $\mathbf{A}$
taking values $\pm1$. Notice however that because we only observe
$\mathbf{A}$ or $\mathbf{B}$, we cannot know what the value of the
second moment $E\left(\mathbf{AB}\right)$ is, and any probability
spaces and r.v.'s on them satisfying the observed marginals $E\left(\mathbf{A}\right)$
and $E\left(\mathbf{B}\right)$ would be adequate. It is easy to prove
that for this firefly box, we can always find a $\left(\Omega,\mathcal{F},p\right)$
consistent with all observed marginals\footnote{For instance, we can just choose a $\left(\Omega,\mathcal{F},p\right)$
such that $\mathbf{A}$ and $\mathbf{B}$ are statistically independent,
i.e. $E\left(\mathbf{AB}\right)=0$, since the moment is not observable
by construction. }.

However, a common probability space does not always exist for r.v's
representing a collection of properties that cannot all be observed
simultaneously. To see this, let us consider a slightly more complicated
firefly example. Imagine a box, shown in Figure \ref{fig:Three-sided-Foulis-box},
\begin{figure}
\begin{centering}
\includegraphics[scale=0.4]{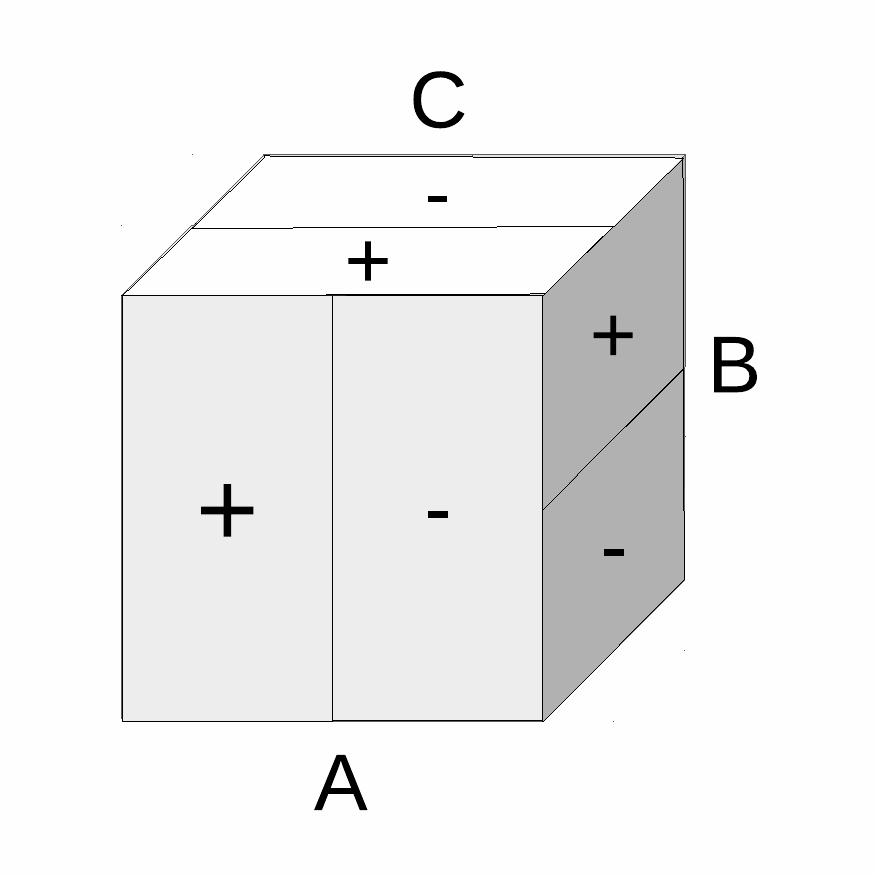}
\par\end{centering}

\protect\caption{\label{fig:Three-sided-Foulis-box}Three-sided firefly box. As before,
a firefly is present inside the box, blinking at random times. When
it blinks, because of the translucency of the walls, only one of face's
side lights up. In this setup, only two sides can be observed simultaneously
(e.g., A and C, but not B). }
\end{figure}
where we can observe not only A and B, but also the top, C. The outcomes
of an observation will modeled by $\pm1$-valued random variables,
$\mathbf{A}$, $\mathbf{B}$, and $\mathbf{C}$, corresponding to
which side of the cube's face glows (marked in the figure with $+$
and $-$ ). It is clear that there is a one-one correspondence between
the region inside the cube and what values the random variables take
if the firefly blinks. For example, the faces of the cube divide it
naturally into octants, and if the firefly is in one octant, the value
of $\mathbf{A}$, $\mathbf{B}$, and $\mathbf{C}$ will be determined.
We can think of the firefly blinking in one octant as corresponding
to an elementary event in the sample space of a probability space
$\left(\Omega,\mathcal{F},p\right)$, and we can label them according
to the values of $\mathbf{A}$, $\mathbf{B}$, and $\mathbf{C}$,
i.e. $\Omega=\left\{ \omega_{abc},\omega_{\overline{a}bc},\omega_{a\overline{b}c},\omega_{ab\overline{c}},\omega_{a\overline{b}\overline{c}},\omega_{\overline{a}b\overline{c}},\omega_{\overline{a}\overline{b}c},\omega_{\overline{a}\overline{b}\overline{c}}\right\} $,
where we use the notation that the subscripts correspond to the outcome
of the random variables, with the barred ones being $-1$ and the
other $+1$ (e.g., $\omega_{a\overline{b}c}$ corresponds to the octant
where $\mathbf{A}=1$, $\mathbf{B}=-1$, and $\mathbf{C}=1$). 

Let us further assume that, like the two-sided box of Figure \ref{fig:Foulis's-box},
we cannot observe all three sides at the same time, but only two.
This means that we do not only have access to the values of $E\left(\mathbf{A}\right)$,
$E\left(\mathbf{B}\right)$, and $E\left(\mathbf{C}\right)$, but
also to their second moments, $E\left(\mathbf{AB}\right)$, $E\left(\mathbf{BC}\right)$,
and $E\left(\mathbf{AC}\right)$, (which together with the individual
expectations fully determine the joint distribution of each pair of
random variables). It is easy to see that if we start with the above
sample space, we can impose constraints on the values of the moments.
To see this, consider the following table:

\begin{center}
\begin{tabular}{|c|c|c|c|c|}
\hline 
 & $\mathbf{AB}$ & $\mathbf{AC}$ & $\mathbf{BC}$ & $\mathbf{AB}+\mathbf{AC}+\mathbf{BC}$\tabularnewline
\hline 
\hline 
$\omega_{abc}$ & $+1$ & $+1$ & $+1$ & $+3$\tabularnewline
\hline 
$\omega_{\overline{a}bc}$ & $-1$ & $-1$ & $+1$ & $-1$\tabularnewline
\hline 
$\omega_{a\overline{b}c}$ & $-1$ & $+1$ & $-1$ & $-1$\tabularnewline
\hline 
$\omega_{ab\overline{c}}$ & $+1$ & $-1$ & $-1$ & $-1$\tabularnewline
\hline 
$\omega_{a\overline{b}\overline{c}}$ & $-1$ & $-1$ & $+1$ & $-1$\tabularnewline
\hline 
$\omega_{\overline{a}b\overline{c}}$ & $-1$ & $+1$ & $-1$ & $-1$\tabularnewline
\hline 
$\omega_{\overline{a}\overline{b}c}$ & $+1$ & $-1$ & $-1$ & $-1$\tabularnewline
\hline 
$\omega_{\overline{a}\overline{b}\overline{c}}$ & $+1$ & $+1$ & $+1$ & $+3$\tabularnewline
\hline 
\end{tabular}
\par\end{center}

\noindent Given that the table holds for individual values, the expected
value for each of the columns\footnote{The quantity $\mathbf{AB}+\mathbf{AC}+\mathbf{BC}$ is itself a random
variable.} are simply a convex combination of their values (which weights for
this convex combination depends on the particular values of the observed
expectations). An immediate consequence is that for the probability
space given, the moments must be always such that 
\begin{equation}
-1\leq E\left(\mathbf{AB}\right)+E\left(\mathbf{AC}\right)+E\left(\mathbf{BC}\right),\label{eq:suppes-zanotti-partial}
\end{equation}
which is the right-hand-side of the Suppes-Zanotti inequalities \cite{suppes_when_1981}.
In other words, if there is a probability space that describes all
the moments for $\mathbf{A}$, $\mathbf{B}$, and $\mathbf{C}$, then
inequality (\ref{eq:suppes-zanotti-partial}) must be satisfied. 

Here we point out that violations of inequality \cite{suppes_when_1981}
correspond to violations of logical consistency, as indicated by Abramsky
and Hardy \cite{abramsky_logical_2012}. To violate \cite{suppes_when_1981},
we need in the convex combination of elements at least some events
that lead to values on the right column that are less than -1. One
such element, for example, is $E\left(\mathbf{AB}\right)=E\left(\mathbf{AC}\right)=E\left(\mathbf{BC}\right)=-1$,
which adds up to $-3$. For these moments, if $\mathbf{A}=1$, then
$E\left(\mathbf{AB}\right)$ implies $\mathbf{B}=-1$, and from $E\left(\mathbf{BC}\right)$
it follows that $\mathbf{C}=1$, which finally leads, from $E\left(\mathbf{AC}\right)$,
to $\mathbf{A}=-1$, a contradiction. The contradiction comes from
the assumption that, say, the random variable $\mathbf{A}$ in the
experiment that measures $E\left(\mathbf{AB}\right)$ is the same
as the ones in the experiment $E\left(\mathbf{AC}\right)$. However,
as we will see, this is not the case, and outcomes of experiments
can depend on contexts. 

To show this let us we tweak the firefly example. As we mentioned,
the box in Figure \ref{fig:Three-sided-Foulis-box} is designed such
that one can only observe two sides at a time. This could be done
by having some mechanism attached to the box that prevents the observer
to see what happens on one of the sides. Let us now connect the mechanism
that selects which sides we can observe to a biasing mechanism inside
the box. This biasing mechanism turns on (inside the box) little devices
that release at random times\footnote{But with expected time intervals that are of the same order of the
expected period in between blinks for the firefly.} pheromones that attract the firefly. If we place those pheromone-releasing
devices in the right place, we can rig the box such that we have higher
probabilities of finding the firefly only in certain octants. Furthermore,
by a careful choice of octants, we can have it built such that the
second moment of, say, $\mathbf{A}$ and $\mathbf{B}$, is close to
$-1$. Because the pheromone-releasing mechanism is connected to the
side-selection mechanism, we can also make it change when we decide
to observe $\mathbf{B}$ and $\mathbf{C}$ or $\mathbf{A}$ and $\mathbf{C}$,
such that their second moment is also $-1$. Of course, $E\left(\mathbf{AB}\right)=E\left(\mathbf{AC}\right)=E\left(\mathbf{BC}\right)\approx-1$
violates (\ref{eq:suppes-zanotti-partial}).

What is happening in the previous example is simple: inequalities
(\ref{eq:suppes-zanotti-partial}) are violated because each observational
setting, i.e. the decision of which two variables to observe, corresponds
to a different experimental condition. This is because the choice
of observing $\mathbf{A}$ and $\mathbf{B}$ instead of any other
pair changes the places where the pheromones are being released. In
other words, the probability space $\left(\Omega,\mathcal{F},p\right)$
assumes that the values of the random variables $\mathbf{A}$ in the
experiment with $\mathbf{B}$ are compatible with $\mathbf{A}$ in
the experiment with $\mathbf{C}$. But such compatibility is impossible.
To illustrate this in a different way, let us examine what happens
to the octants as we impose the moments $E\left(\mathbf{AB}\right)=E\left(\mathbf{AC}\right)=E\left(\mathbf{BC}\right)=-1$.
As we saw above, the sample space $\Omega$ is represented in terms
of the octants, one for each elementary event in $\Omega=\left\{ \omega_{abc},\omega_{\overline{a}bc},\omega_{a\overline{b}c},\omega_{ab\overline{c}},\omega_{a\overline{b}\overline{c}},\omega_{\overline{a}b\overline{c}},\omega_{\overline{a}\overline{b}c},\omega_{\overline{a}\overline{b}\overline{c}}\right\} $.
\begin{figure}
\centering{}\includegraphics[scale=0.5]{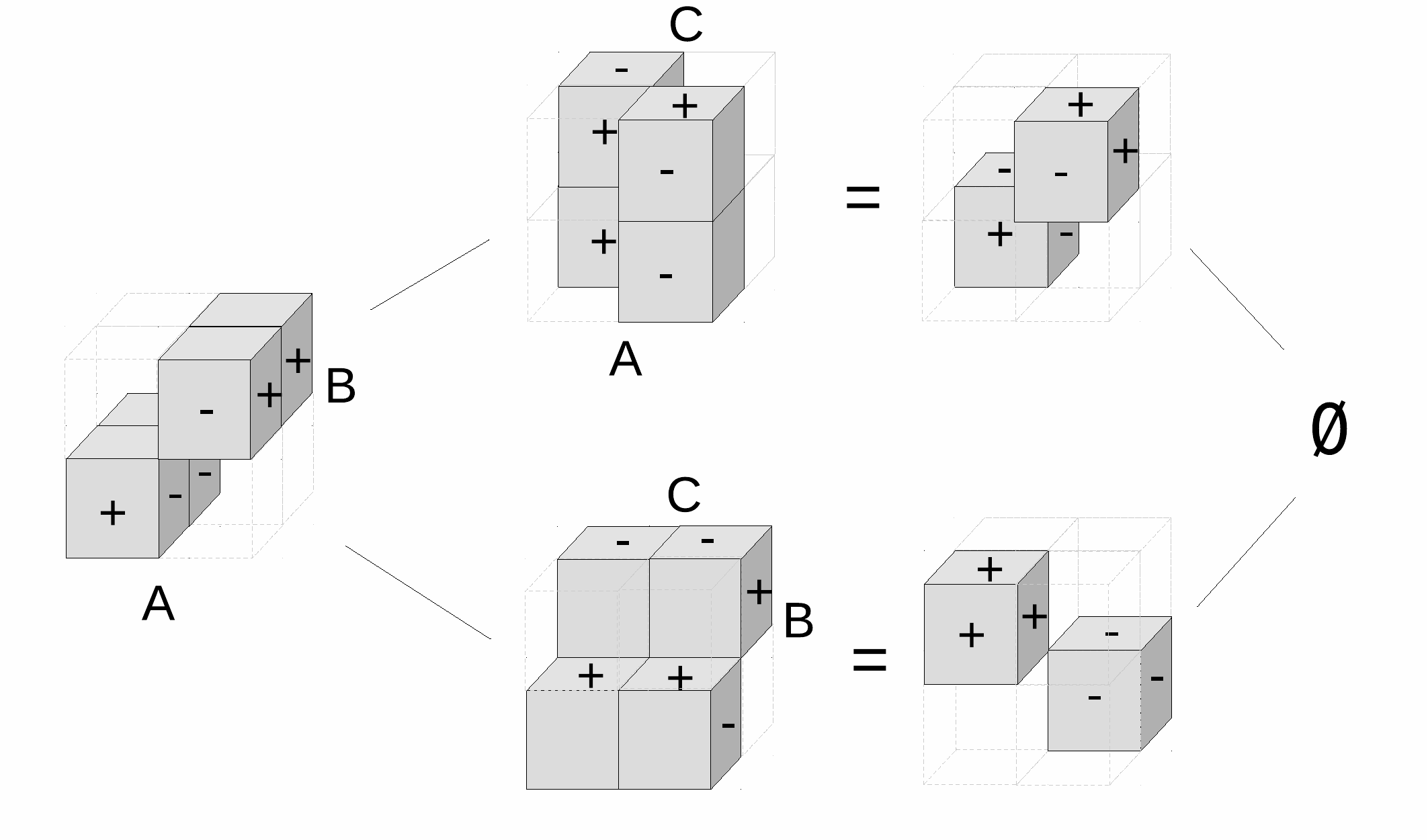}\protect\caption{\label{fig:Contextual-Foulis's-box}Contextual firefly box with two
observables at a time. On the left hand side we have the possible
octants where the firefly is when $\left(\mathbf{A},\mathbf{B}\right)$
are observed, and in the middle $\left(\mathbf{A},\mathbf{C}\right)$
and $\left(\mathbf{B},\mathbf{C}\right)$ (bottom). After the equal
signs we have the intersection between $\left(\mathbf{A},\mathbf{B}\right)$
and $\left(\mathbf{A},\mathbf{C}\right)$ (top right) and between
$\left(\mathbf{A},\mathbf{B}\right)$ and $\left(\mathbf{B},\mathbf{C}\right)$
(bottom right). It is clear that the intersection between those scenarios
lead to the empty set, an impossible event. }
\end{figure}
 To reproduce the $E\left(\mathbf{AB}\right)=-1$ observation (when
$\mathbf{C}$ is not observable), the firefly would need to be in
the two regions denoted by the two prisms on the left hand side of
Figure \ref{fig:Contextual-Foulis's-box}. The further constraint
that $E\left(\mathbf{AC}\right)=-1$ leads to a smaller region of
the sample space, corresponding to only two cubes (center top on Figure
\ref{fig:Contextual-Foulis's-box}). But that implies that $E\left(\mathbf{BC}\right)=1$,
and there are no points in the sample space that correspond to $E\left(\mathbf{AB}\right)=E\left(\mathbf{AC}\right)=E\left(\mathbf{BC}\right)=-1$
(this, by the way, is straightforward from our table above, and is
also shown pictorially in Figure \ref{fig:Contextual-Foulis's-box}).
Therefore, the regions where the firefly is depends on which sides
of the box you are observing. This characteristic is called \emph{contextuality}. 

Thus, contextuality, for us, can be stated in the following way. If
a set of random variables, measured under different experimental conditions
and never all at the same time, cannot be represented as partitions
of a joint probability distribution, then they are contextual. From
our example it should be clear that the nonexistence of a joint probability
distribution for contextual random variables was not based on taking
a coarse-grained probability space over the firefly's path. But to
make it explicit, we can notice that any probability space that reproduces
the outcomes of $\mathbf{A}$, $\mathbf{B}$, and $\mathbf{C}$ must
have as part of its algebra the elements $p_{abc},\ldots,p_{\overline{a}\overline{b}\overline{c}}$,
and therefore it cannot have a proper probability distribution over
it. 

Following the Contextuality-by-Default (CbD) approach \cite{dzhafarov_contextuality_2014-1,dzhafarov_contextuality_2015,kujala_necessary_2015},
one should assume that different experiments (e.g, observing $(A,B)$,
observing $(B,C)$, and observing $(A,C)$ in the above example) are
stochastically unrelated and therefore modeled on distinct probability
spaces. Indeed, only one experiment can be performed at a time so
there is no pairing-scheme to justify defining the random variables
of different experiments on the same probability space. 

Thus, indexing properties by subscripts $i=1,\dots,M$ and different
contexts by superscripts $j=1,\dots,N$, let us model the result of
observing property $i$ in context $j$ by the random variable $\mathbf{R}_{i}^{j}:\Omega_{j}\to E_{i}$,
where $\mathbf{R}_{i}^{j}$ with different $i$ but same $j$ are
all jointly distributed but $\mathbf{R}_{i}^{j}$ and $\mathbf{R}_{i'}^{j'}$
for $j\ne j'$ are stochastically unrelated for all $i,i'\in\{1,\dots,n\}$,
equal or not. For keeping the notation uncluttered, we assume that
when property $i$ does not appear in context $j$, the expression
$\mathbf{R}_{i}^{j}$ is undefined and left out from any enumerations.
Using this convention, we denote by $\mathbf{R}^{j}=\{\mathbf{R}_{i}^{j}:i=1,\dots,m\}$
the jointly distributed set of random variables modeling the measurement
of all properties appearing in context $j$. 
\begin{defn}
A collection of random variables $\mathbf{R}_{i}^{j}$ is said to
be \emph{consistently connected} if $\mathbf{R}_{i}^{j}\sim\mathbf{R}_{i}^{j'}$
for all $i\in\{1,\dots,m\}$ and all contexts $j,j'\in\{1,\dots,n\}$
in which the property $i$ appears (here we use the notation $\mathbf{A}\sim\mathbf{B}$
to signify that ``$\mathbf{A}$ has the same distribution as $\mathbf{B}$'').
If a system is not consistently connected, it is said to be \emph{inconsistently
connected.}
\end{defn}
Intuitively, \emph{consistently connected }means that one cannot find
differences between a random variable in one context and another by
solely observing this random variable\footnote{This, by the way, is related to the no-signaling condition in physics.}.
For example, if we observe $\mathbf{A}$ in the context of $\mathbf{B}$,
and we then observe $\mathbf{A}$ in the context of $\mathbf{C}$,
no scrutiny of the distribution or values of $\mathbf{A}$ can tell
us which context it was observed in if the random variables are consistently
connected. 

Random variables that are not consistently connected are obviously
context-dependent. As we will see in Section 4 below, many of the
examples in physics and psychology are context-dependent because of
being inconsistently connected. However, it is still possible for
a system of random variables to present contextuality even if they
are consistently connected, and we need to distinguish those cases.
This is the essence of the following definitions. 
\begin{defn}
A \emph{coupling} of random variables $\mathbf{X}_{1},\dots,\mathbf{X}_{n}$
(that may be defined on different probability spaces) is any jointly
distributed set of random variables $\mathbf{Z}_{1},\dots,\mathbf{Z}_{n}$
such that $\mathbf{X}_{1}\sim\mathbf{Z}_{1}$, $\dots$, $\mathbf{X}_{n}\sim\mathbf{Z}_{n}$.
\end{defn}
Intuitively, a coupling imposes a joint distribution on a set of random
variables and hence formalizes the concept of finding a common sample
space for a set of random variables. Thus, we can define the traditional
understanding of (non-)contextuality in a mathematically rigorous
form as follows.
\begin{defn}
\label{def:contextuality}A collection of random variables $\mathbf{R}_{i}^{j}$
is non-contextual if and only if there exists a coupling $\mathbf{Q}^{1},\dots,\mathbf{Q}^{n}$
of $\mathbf{R}^{1},\dots\mathbf{R}^{n}$ such that $\mathbf{Q}_{i}^{j}=\mathbf{Q}_{i}^{j'}$
for all $i\in\{1,\dots,m\}$ and all contexts $j,j'\in\{1,\dots,n\}$
in which the property $i$ appears. 
\end{defn}
Definition \ref{def:contextuality} only holds for consistently connected
systems, as it requires, as its consequence, that $\mathbf{R}_{i}^{j}\sim\mathbf{R}_{i}^{j'}$
for all $i\in\{1,\dots,m\}$ and all contexts $j,j'\in\{1,\dots,n\}$
in which property $i$ appears. If the coupling of Definition \ref{def:contextuality}
exists, we can denote $\mathbf{Q}_{i}=\mathbf{Q}_{i}^{j}=\mathbf{Q}_{i}^{j'}=\dots$
for all $i=1,\dots,m$ and all contexts $j,j',\dots\in\{1,\dots,n\}$
in which the property $i$ appears. We can think intuitively of the
distributions of $\mathbf{R}^{j}$'s as observable marginal distributions
of the hypothetical larger system $\mathbf{Q}_{1},\dots,\mathbf{Q}_{m}$.
Thus, the collection of r.v.'s is non-contextual if it is possible
to ``sew'' the observed marginal probabilities together to produce
a larger probability distribution over the whole set of properties
\cite{dzhafarov_all-possible-couplings_2013,dzhafarov_contextuality_2014,de_barros_unifying_2014,de_barros_negative_2015}.\emph{
}As mentioned, in Quantum Mechanics, and perhaps Psychology, it may
not be possible to do that, but in many cases the marginal probabilities
are compatible with a \emph{signed} joint probability distribution
of $\mathbf{Q}_{1},\dots,\mathbf{Q}_{m}$. 

We are now left with the following three situations for collections
of r.v.'s measured under different contexts: they are non-contextual
(i.e., they can be imposed on a proper joint probability distribution
in which r.v's representing the same property are always equal); the
random variables are contextual and consistently connected (in the
next section, we show that they can then be imposed on a signed joint
distribution); and they are inconsistently connected\footnote{Here we should add a comment on terminology. In the CbD approach,
the definition of noncontextuality is extended \cite{dzhafarov_contextuality_2015,dzhafarov_is_2015}
to inconsistently connected systems by allowing $\mathbf{Q}_{i}^{j}=\mathbf{Q}_{i}^{j'}=\dots$
to not hold as long as the probability of it holding is in a certain
well-defined sense maximal for each $i$. This allows one to detect
contextuality on top of inconsistent connectedness and so in the most
recent terminology of CbD, a system can be inconsistently connected
and yet not contextual. For the present paper, since we are mostly
focusing on the NP approach which does not apply to inconsistently
connected system, it suffices to use the traditional understanding
of contextuality.}. For inconsistently connected systems, things are a little more subtle,
and since NP cannot yet deal with them, we refer to the works of Dzhafarov
and Kujala \cite{kujala_necessary_2015,kujala_probabilistic_2015}.

\section{Negative Probabilities\label{sec:Negative-Probabilities}}

As we mentioned in the previous section, contextuality appears not
only in psychology but also in physical systems. In this Section we
describe one possible approach to describing contextuality: Negative
Probabilities (NP). NP are a generalization of standard Kolmogorovian
probabilities to accommodate the observations performed in a system
that behaves in a contextual way. 

It is important at this point to understand why changing the theory
of probability is a desirable approach. When we have context-dependent
r.v.'s, such as the $\mathbf{A}$, $\mathbf{B}$, and $\mathbf{C}$
in the firefly box, one could argue that the nonexistence of a joint
probability distribution comes from a mistake: the identification
of a random variable, say $\mathbf{A}$, in two different contexts
(e.g. $\left(\mathbf{A},\mathbf{B}\right)$ and $\left(\mathbf{A},\mathbf{C}\right)$
as being the same). This is clearly what is happening, and the solution
to it is, following Dzhafarov and Kujala's Contextuality-by-Default
approach, to clearly label each r.v. according to its context. However,
there are cases when such distinction may not highlight important
non-trivial features of a system. One such case is the famous Bell-EPR
experiment, which we describe below. For this experiment, because
the experiments that measure, e.g., $\mathbf{A}$ and $\mathbf{B}$
should not interfere with each other, for physical reasons, it makes
no sense to label them differently. However, the Bell-EPR system is
contextual, and using the same label brings this contextuality to
the surface in a very dramatic way. Therefore an extended probability
theory may help shed light in some of those contextual cases. 

Because contextuality is equivalent to the non-existence of a joint
probability distribution (see Proposition \ref{prop:no-signaling}
below) for a collection of random variables, some proposals for dealing
with contextual systems are to simply change the theory of probability.
This is what was done in quantum mechanics, where the complementarity
principle, whereupon some variables were forbidden in principle to
be observed simultaneously, opened up the need to describe such contextual
systems with the formalism of measures over Hilbert spaces. However,
a question in QM is what are the principles behind such a specific
generalized probability theory? Why do we use Hilbert spaces? These
questions form an important topic of research, and are yet unanswered.
Similarly, these questions can also be asked for psychology. Why is
the quantum formalism adequate to model psychological systems? Up
to now, it seems that all arguments about using the quantum formalism
are related to specific examples that form a subset of those in physics,
since they all involve inconsistently connected systems.

So, instead of using quantum probabilities, as do researchers in quantum
cognition, we propose a more general framework given by negative probabilities.
Our definition of NP is a straightforward generalization of Kolmogorov's
probability. A discrete signed probability space is given by a triple
$(\Omega,\mathcal{F},p)$ with the same components as a proper probability
space, except that the function $p:\Omega\to\mathbb{R}$ is allowed
to attain negative values and values larger than 1, as long as it
still satisfies $p(\Omega)=1$. The probability of an event $E\in\mathcal{F}$
is still calculated as $p(S)=\sum_{\omega\in S}p(\{\omega\})$, like
in proper probability spaces, and random variables and expectations
are defined analogously to those of proper probability spaces.

Let us motivate the above definition. First, we notice that, in Definition
\ref{def:contextuality}, contextuality was defined as the impossibility
to impose a joint distribution on $\mathbf{R}^{1},\dots,\mathbf{R}^{n}$
such that the stochastically unrelated random variables representing
the same property are always equal in this joint. However, for consistently
connected systems, it turns out it is always possible to find such
a joint on a signed probability space:
\begin{prop}
\label{prop:no-signaling}For a collection $\mathbf{R}_{i}^{j}$ of
discrete r.v.'s on finite sample spaces, the following are equivalent
\begin{enumerate}
\item there exists on a signed probability space jointly distributed r.v.'s
$\mathbf{Q}_{1},\dots,\mathbf{Q}_{m}$ such that for all $j=1,\dots,n$,
it holds $\{\mathbf{R}_{i}^{j},\mathbf{R}_{i'}^{j},\dots\}\sim\{\mathbf{Q}_{i},\mathbf{Q}_{i'},\dots\}$
where $i,i',\dots$are the properties appearing in context $j$. (Here
``$\sim$'' is taken to refer to the joint distributions of the
two sets of r.v.'s).
\item the collection $\mathbf{R}_{i}^{j}$ is consistently connected.
\end{enumerate}
\end{prop}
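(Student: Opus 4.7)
The forward direction (1) $\Rightarrow$ (2) is immediate: taking the single-coordinate marginal of the signed joint distribution of $\mathbf{Q}_1,\dots,\mathbf{Q}_m$ shows $\mathbf{R}_i^j \sim \mathbf{Q}_i$ for every context $j$ containing property $i$. Since the right-hand side is independent of $j$, any two contexts $j,j'$ containing $i$ yield $\mathbf{R}_i^j \sim \mathbf{R}_i^{j'}$, which is consistent connectedness.

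For the converse (2) $\Rightarrow$ (1), the plan is to build the joint explicitly. Consistent connectedness gives, for each property $i$, a well-defined common single-coordinate distribution $\mu_i$ on $E_i$. Set $\Omega = \prod_{i=1}^m E_i$ and $\mu = \prod_{i=1}^m \mu_i$. Writing $I_j$ for the property set of context $j$, $P^j$ for the joint distribution of $\mathbf{R}^j$, and $\mu^{I_j} = \prod_{i \in I_j} \mu_i$, define the signed correction $\nu^j = P^j - \mu^{I_j}$, which has total mass zero and, crucially, all single-coordinate marginals zero. Lift it to $\Omega$ by tensoring with the remaining coordinates, $\tilde{\nu}^j = \nu^j \otimes \prod_{i \notin I_j} \mu_i$, and put
\[
P = \mu + \sum_{j=1}^n \tilde{\nu}^j.
\]
Take $(\mathbf{Q}_1,\dots,\mathbf{Q}_m)$ to be the coordinate projections on $(\Omega, 2^{\Omega}, P)$.

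Verification is essentially bookkeeping. First $P(\Omega) = 1$ because every $\nu^j$ has total mass zero. For the $I_{j_0}$-marginal, the $\mu$ term contributes $\mu^{I_{j_0}}$, the $j = j_0$ summand contributes $\nu^{j_0}$, and for $j \ne j_0$ the $I_{j_0}$-marginal of $\tilde{\nu}^j$ factors as $\nu^j|_{I_j \cap I_{j_0}} \otimes \prod_{i \in I_{j_0} \setminus I_j} \mu_i$. Provided any two distinct contexts overlap in at most one property, this first factor is either the total mass of $\nu^j$ or a single-coordinate marginal of $\nu^j$, hence zero; the cross terms vanish and $P^{I_{j_0}} = \mu^{I_{j_0}} + \nu^{j_0} = P^{j_0}$, as required.

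The main obstacle, and really a hidden hypothesis, is the ``overlap at most one property'' condition, satisfied by all the paper's examples (the firefly boxes and the Bell--EPR setup). If some pair of contexts shared two or more properties, consistent connectedness constrains only single-coordinate marginals and would be too weak to force the shared multi-coordinate marginals of the $P^j$ to agree, so no signed joint could exist. In that more general situation one would either strengthen (2) to full compatibility on every $I_j \cap I_{j'}$, or replace the product-plus-correction construction with an inclusion--exclusion over the lattice of context intersections and prove solvability of the resulting finite-dimensional linear system by a Farkas-type duality argument.
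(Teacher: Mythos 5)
Your argument is correct where it applies, and it is more informative than what the paper provides: the paper's ``proof'' of Proposition \ref{prop:no-signaling} is only a citation to Abramsky--Brandenburger, Al-Safi--Short, and Oas et al., so your explicit product-plus-correction construction $P=\mu+\sum_{j}\tilde{\nu}^{j}$ is a self-contained derivation rather than a rederivation of the paper's route. The forward direction is exactly right, and in the converse your bookkeeping checks out: $\nu^{j}$ has zero total mass and zero single-coordinate marginals precisely because $\mu_i$ is the common distribution guaranteed by consistent connectedness, so all cross terms in the $I_{j_0}$-marginal vanish when $|I_j\cap I_{j_0}|\le 1$. More importantly, you have put your finger on a genuine defect in the proposition as stated rather than in your proof. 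The equivalence fails if two contexts may share two or more properties: take three $\pm1$-valued properties, context $1$ measuring $\{1,2,3\}$ with $\mathbf{R}_1^1$ and $\mathbf{R}_2^1$ perfectly correlated, context $2$ measuring $\{1,2\}$ with $\mathbf{R}_1^2$ and $\mathbf{R}_2^2$ perfectly anticorrelated, and all one-coordinate marginals uniform. This system is consistently connected in the paper's sense, yet no signed joint can have its $\{1,2\}$-marginal equal to both pair distributions. The results the paper cites in fact assume the stronger hypothesis you identify---agreement of the full joint distributions on every overlap $I_j\cap I_{j'}$ (the no-signalling/compatibility condition of the sheaf-theoretic treatment)---under which the theorem holds for arbitrary overlaps. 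So the proposition is true, and your construction proves it, exactly for systems whose contexts pairwise overlap in at most one property; this covers every example in the paper, including the Cabello $18$-vector Kochen--Specker configuration, where any two of the nine contexts share at most one ray.

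On your closing remark: once condition (2) is strengthened to compatibility on all intersections, the general case is usually handled by showing that the linear marginalization map from signed global distributions onto compatible families of local distributions is surjective (a rank argument over $\mathbb{R}$), which is in the spirit of the linear-system fallback you sketch; the inclusion--exclusion refinement of your construction over the lattice of intersections also works but requires care when three or more contexts overlap.
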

\begin{proof}
See \cite{abramsky_sheaf-theoretic_2011,al-safi_simulating_2013,oas_exploring_2014}.\end{proof}
\begin{defn}
Let $\mathbf{R}_{i}^{j}$ be a consistently connected collection of
r.v.'s. Then, the \emph{minimum L1 probability norm}, denoted\emph{
$M^{*}$, }or simply \emph{minimum probability norm}, is given by
$M^{*}=\min\sum_{\omega\in\Omega}\left|p\left(\left\{ \omega\right\} \right)\right|$,
where the minimization is over all signed probability spaces $(\Omega,\mathcal{F},p)$
and r.v.'s $\mathbf{Q}_{1},\dots,\mathbf{Q}_{n}$ on it that satisfy
condition 1 of Proposition \ref{prop:no-signaling}.
\end{defn}

From this definition it is easy to prove the following: 
\begin{prop}
\label{prop:m=00003D1impliesjoint}A consistently connected collection
of r.v.'s is non-contextual if and only if $M^{*}=1$. \end{prop}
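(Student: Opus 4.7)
The plan is to reduce everything to one elementary inequality: for any signed probability $p$ on a finite set $\Omega$, we have $\sum_{\omega} |p(\{\omega\})| \geq |\sum_{\omega} p(\{\omega\})| = 1$, with equality if and only if $p(\{\omega\}) \geq 0$ for every $\omega$. Because Proposition \ref{prop:no-signaling} guarantees that the feasible set in the definition of $M^{*}$ is non-empty whenever $\mathbf{R}_i^j$ is consistently connected, this immediately yields $M^{*} \geq 1$, and the proposition reduces to analysing when equality holds.

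For the ``if'' direction, suppose $M^{*} = 1$. I would first observe that the minimum is actually attained: by pushing any candidate signed space forward along $(\mathbf{Q}_1, \ldots, \mathbf{Q}_m)$ one may assume $\Omega = \prod_i E_i$, a fixed finite set, on which the feasible signed probabilities cut out a closed convex polytope with compact $L^1$-sublevel sets. At a minimiser $p^{*}$, the equality case of the triangle inequality above forces $p^{*} \geq 0$ pointwise, so $(\Omega, 2^{\Omega}, p^{*})$ is an ordinary Kolmogorovian probability space. Declaring $\mathbf{Q}_i^j := \mathbf{Q}_i$ in every context $j$ where property $i$ appears then produces a genuine coupling satisfying Definition \ref{def:contextuality}, so the collection is non-contextual.

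The ``only if'' direction is dual: non-contextuality provides a proper coupling $\mathbf{Q}^1, \ldots, \mathbf{Q}^n$ on some non-signed probability space with $\mathbf{Q}_i^j = \mathbf{Q}_i^{j'}$ whenever property $i$ appears in both contexts, so taking $\mathbf{Q}_i$ to be this common value furnishes a feasible competitor in the definition of $M^{*}$ with $\sum_\omega |p(\{\omega\})| = 1$. Combined with the lower bound $M^{*} \geq 1$, this gives $M^{*} = 1$. The only even mildly delicate point is the attainment of the minimum, handled by the pushforward reduction above; the rest is the equality case of the triangle inequality dressed in the language of couplings.
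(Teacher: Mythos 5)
Your proof is correct and complete; the paper itself gives no argument for this proposition (it only cites de Barros et al.), and your reasoning --- the triangle-inequality lower bound $\sum_\omega|p(\{\omega\})|\ge p(\Omega)=1$ with equality forcing nonnegativity, plus the pushforward to the canonical sample space $\prod_i E_i$ to guarantee attainment of the minimum on a closed set with compact $L^1$-sublevel sets --- is exactly the standard route one would expect the cited reference to take. The only quibble is terminological: the feasible set is an affine polyhedron, not a (bounded) polytope, but your compactness argument via sublevel sets already handles this correctly.
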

\begin{proof}
See \cite{de_barros_negative_2015}.
\end{proof}
If follows that since $M^{*}$ can be greater than one for contextual
systems, and that the greater the value of $M^{*}$ the further away
from a proper probability distribution it lies (due to the strong
relationships imposed, e.g., by the moments of the random variables),
it is natural to interpret $M^{*}$ as a measure of contextuality:
the larger the value of $M^{*}$, the more contextual the system \cite{de_barros_unifying_2014}. 

From Proposition \ref{prop:no-signaling}, it follows that inconsistently
connected systems of random variables cannot be described with negative
probabilities. Here we are left with only one possibility. If, for
a system of random variables, some of them are not consistently connected,
then we need to face the fact that they are not the same random variable,
and label them accordingly, following the prescription of Contextuality-by-Default
\cite{dzhafarov_contextuality_2014-1,dzhafarov_qualified_2014}. 

We end this Section with some comments about the meaning of NP. One
of the main obstacles to the use of NP is the lack of an interpretation.
After all, what meaning should we give to them? If probabilities are,
as in some objective views, given by relative frequencies of actual
realizable events, how can we even consider a probability to be negative? 

First, we should point out that NP are not directly observable, but
only inferrable. For instance, in the firefly box, negative probabilities
appear exactly because we cannot observe all three sides simultaneously
(or know where the firefly is). Were we able to observe all three
simultaneously, then a joint probability distribution would necessarily
exist. With the observable moments, any attempt to create probabilities
that have marginals consistent with the moments lead to NP. 

Our second point is that NP may be useful in certain applications.
For example, in physics an important question is what are the physical
principles that define QM. NP may be an adequate tool to help us understand
those principles. We will not explore this application of NP here,
but the interested reader is referred to \cite{oas_exploring_2014,oas_survey_2015}.

That said, there are ways to interpret NP, even consistently with
a frequentist interpretation. Here we will briefly sketch how some
of those interpretations work, but the interested reader should refer
to the cited references. We start with Andrei Khrennikov's $p$-adic
interpretation. Khrennikov \cite{khrennikov_p-adic_1993,khrennikov_p-adic_1993-1,khrennikov_statistical_1993,khrennikov_discrete_1994,khrennikov_p-adic_1994,khrennikov_interpretations_2009}
showed that for the frequentist interpretation proposed by von Mises,
where probabilities are defined as the convergent ratio of infinite
sequences, NP appear in sequences where the usual Archimedian metric
does not converge to a specific value (i.e., sequences not satisfying
the principle of stabilization). When Archimedian metrics do not converge,
$p$-adic metrics may do so, and in those cases NP appear as the $p$-adic
limiting case. In other words, Khrennikov interprets infinite sequences
that do not satisfy the principle of stabilization as arising from
contextuality, and describable by negative probabilities. However,
the relationship between NP and observations, in this interpretation,
is not straightforward, as it depends on the particular $p$-adic
metric chosen. 

Another interpretation of NP, also frequentist, is the one proposed
by Abramsky and Brandenburger \cite{abramsky_sheaf-theoretic_2011,abramsky_operational_2014}.
They use, in the context of sheaf theory, the concept that events
may have two different types that may annihilate each other. In most
circumstances, when quantities are observed, no events are annihilated;
however, when there are context-dependent observables, they are context-dependent
because each context determines a different interaction between the
observables through their annihilation. 

Finally there is Szekely's ``half-coin'' interpretation of NP \cite{ruzsa_convolution_1983,szekely_half_2005}.
The idea is that two probability distributions that are negative may
give rise to a non-negative proper probability distribution. In this
interpretation, negative probabilities $P$ are related to a proper
probability $p$ via a convolution equation $P*p_{-}=p_{+}$, which
is always possible to be found \cite{ruzsa_convolution_1983,szekely_half_2005}.
This convolution means that for a random variable $\mathbf{X}$ whose
(negative) probability distribution is $P$, there exists two other
random variables, $\mathbf{X}_{+}$ and $\mathbf{X}_{-}$ with proper
probability distributions ($p_{+}$ and $p_{-}$, respectively) and
such that $\mathbf{X}=\mathbf{X}_{+}-\mathbf{X}_{-}$. As one can
see, this interpretation is closely related to that of Abramsky and
Brandenburger. 

In this paper we favor a more pragmatic ``interpretation.'' Negative
probabilities are taken here to be simply an accounting tool, one
that provides us the best subjective information about systems which
do not have an objective probability distribution, as it is the closest
distribution to a proper one (via normalization of the L1 norm). This
is analogous to the use of negative numbers in mathematics, which
was considered by many absurd. For example, the famous mathematician
Augustus De Morgan wrote the following about negative numbers \cite[pg. 72]{morgan_study_1910}. 
\begin{quote}
``Above all, he {[}the student{]} must reject the definition still
sometimes given of the quantity $-a$, that it is less than nothing.
It is astonishing that the human intellect should ever have tolerated
such an absurdity as the idea of a quantity less than nothing; above
all, that the notion should have outlived the belief in judicial astrology
and the existence of witches, either of which is ten thousand times
more possible.'' 
\end{quote}
However, nowadays we understand that negative numbers can be a useful
bookkeeping device. For example, when tracking a store inventory,
one would not be overly concerned about something such as ``$-30$
rolls of toilet paper'' in our spreadsheet, and equate such a line
to ``the existence of witches.'' We approach NP the same way, asking
whether it can be a useful device that may not only help us in computations
but also give us further insights in some situations, as mentioned
above. But, as De Morgan, we consider a statement such as ``event
A has probability $-0.1$'' on equal terms with judicial astrology.

\section{Some examples and applications\label{sec:Some-examples}}

Let us now examine some examples of contextual systems, and how they
can be described (or not) with negative probabilities. We already
gave an example of a contextual system above, with the $\mathbf{A}$,
$\mathbf{B}$, $\mathbf{C}$ random variables from the three-sided
firefly box. Here we will look at examples from physics, in particular
Quantum Mechanics, and then move to psychology\footnote{For some of our physics examples, we assume that the reader is familiar
with the mathematical formalism of QM. Readers not familiar with it
may wish to skip the details, since they do not affect the overall
understanding of this paper, or may refer to the many available texts
on this subject (e.g. \cite{cohen-tannoudji_quantum_1977} or \cite{peres_quantum_1995}). }. 

Perhaps the most important example of contextuality in physics is
the double-slit experiment, as it contains in its essence the complementarity
principle. So, here we start this section with this experiment, but
in a simplified version given by the Mach-Zehnder interferometer,
which captures the main features of complementarity. We then re-examine
the firefly box in more details, showing how negative probabilities
can model some of its outcomes. Next, we review the first example
where contextuality was recognized as playing a key role in Quantum
Mechanics, the famous Kochen-Specker theorem \cite{kochen_problem_1967}.
Finally, as a last physics example, we investigate the Bell-EPR with
negative probabilities. We then move to the contextual cases in quantum
cognition, and we discuss how those are related to the different physics
cases shown before. We end this section with a discussion of negative
probabilities as a possible way to measure the contextuality of an
observable system.

\subsection{Interference experiments\label{sub:Interference-experiments}}

In the double-slit experiment, a particle impinges on a solid barrier
that has on it two small and parallel slits close to each other. The
particle has a probability of passing through the slits, later on
being detected on a scintillating screen. Contextuality in this experiment
appears as a manifestation of the wave/particle duality: the places
where the screen scintillates depend on whether we know any which-path
information for the particle, i.e., whether the particle went through
one slit and not the other (see \cite{feynman_feynman_2011} for a
detailed discussion of the double slit). 

Since the detection rates of an observed event depend on the context,
it is immediate that the double-slit experiment exhibits trivial contextuality.
It thus follows that it cannot be described using negative probabilities\footnote{Except if one makes special counterfactual assumptions, as common
in certain physics experiments \cite{de_barros_negative_2014,de_barros_negative_2015}.}. However, due to its importance in many applications of the quantum
formalism to psychology, we present a brief discussion of it here
in a simplified form. To do so, we use a conceptually similar setup
where the slits and the screen are replaced by the Mach-Zehnder interferometer
shown in Figure (see Figure \ref{fig:Mach-Zehnder-Interferometer};
for a more detailed discussion of the MZI, see references \cite{de_barros_negative_2015,de_barros_examples_2015}).
\begin{figure}
\begin{centering}
\includegraphics[scale=0.45]{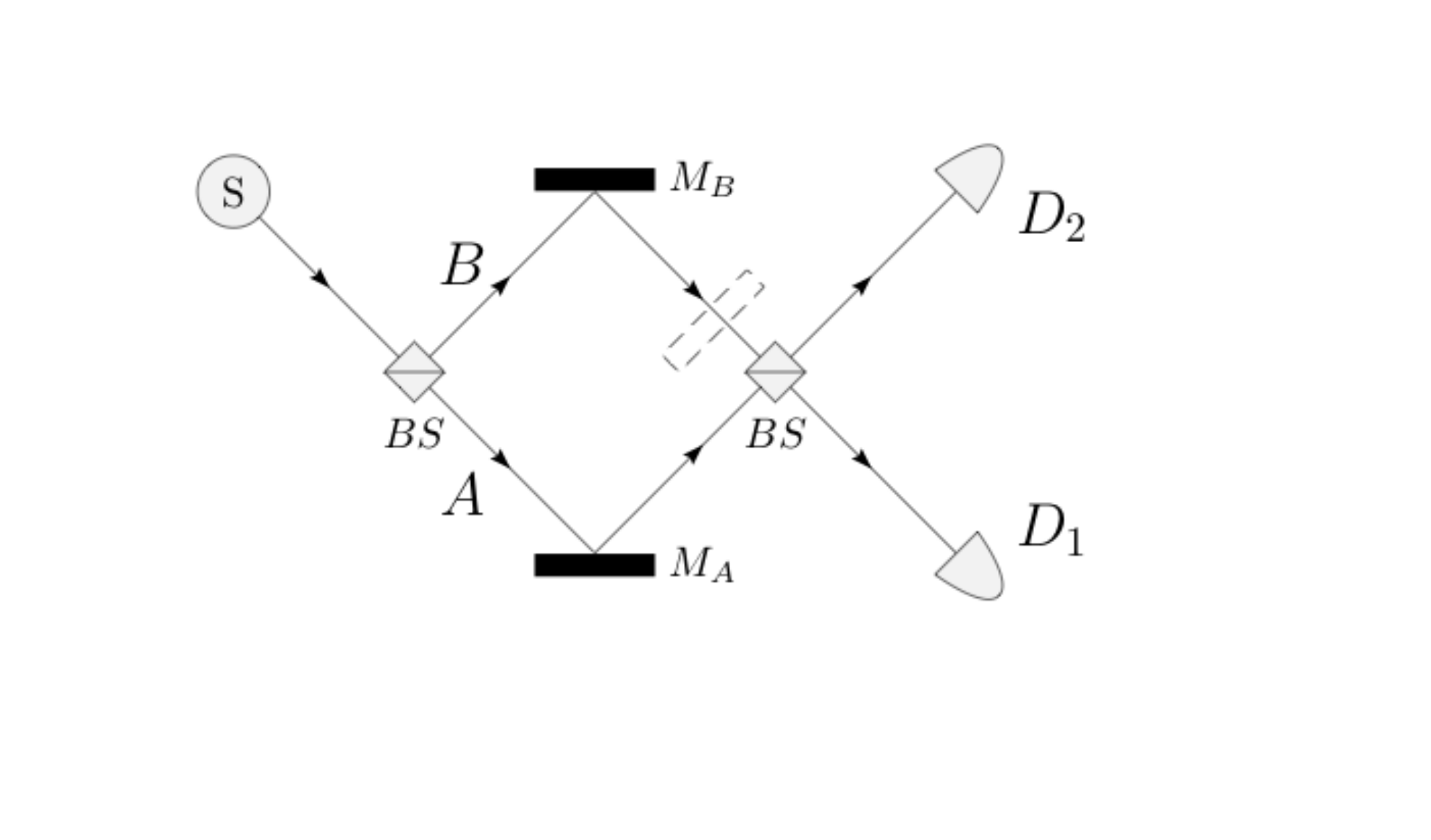}
\par\end{centering}

\protect\caption{\label{fig:Mach-Zehnder-Interferometer}Mach Zehnder Interferometer
(MZI). A source $S$ sends a particle beam that impinges on the first
beam splitter $BS$. The beam is them divided by $BS$ into equal-intensity
(i.e. particle numbers, on average) beams that travel to both arms
(paths) $A$ and $B$ of the interferometer, reflecting on surfaces
$M_{A}$ and $M_{B}$. The beams from arms $A$ and $B$ are then
recombined in the second beam splitter. The outcomes are the two beams
detected at $D_{1}$ and $D_{2}$. }
\end{figure}
 Beam splitters send the particles into two directions in a random
way, such that if we place a particle detector after each of the outputs
of the beam splitter, we will see that in the long run the number
of particles going to one side approaches that of going to the other
side. After the first beam splitter, some reflecting surfaces redirect
the beams to another beam splitter, and the beams are recombined.
From QM, the whole system can be described mathematically by a wavefunction,
and the recombination of the two beams in the second beam splitter
leads to an interference effect. Careful positioning of the beam splitters
and reflecting surfaces allow for perfect interference, namely all
particles reaching $D_{1}$ and none reaching $D_{2}$. 

What makes the MZI interesting is that the placement of a detector
on \emph{either} path $A$ or path $B$ causes a collapse of the wavefunction,
thus changing the outcomes of a measurement of $D_{1}$ and $D_{2}$:
they now have the same probability of detecting a particle. However,
let us recall that if no detectors are placed on $A$ or $B$, the
particle has zero probability of reaching $D_{2}$. Furthermore, if
we simply block one of the paths, say by putting a barrier in $A$,
half of the particles going through $B$ will reach $D_{2}$. This
is seemingly disturbing, for how can we \emph{increase} the probability
of detection of $D_{2}$ when we actually \emph{decrease }the number
of ways in which the particle can reach $D_{2}$? This is the main
difficulty of the double-slit experiment. 

To see that the observations of $D_{1}$ and $D_{2}$ are contextual,
let $\mathbf{P}$ and $\mathbf{D}$ be two $\pm1$-valued random variables
representing which-path information and detection: $\mathbf{P}=1$
if the particle is detected on $A$ and $\mathbf{P}=-1$ otherwise,
$\mathbf{D}=1$ if the particle is detected in $D_{1}$ and $\mathbf{D}=-1$
otherwise. The MZI has two contexts: there is a detector on $A$ or
$B$, providing which-path information, or no detector. $\mathbf{D}$
measured under the no-which-path context has expected value $E\left(\mathbf{D}\right)=1$,
whereas a joint measurement of $\mathbf{D}$ and $\mathbf{P}$ gives
as marginal expectation the result $E\left(\mathbf{D}\right)=0$.
Thus, according to the above definition, $\mathbf{D}$ is inconsistently
connected. 

So, since it is inconsistently connected, how would we model the MZI
with NP? The fact that $\mathbf{D}$ changes when measured with $\mathbf{P}$
or not leads to the necessity of defining two different random variables,
$\mathbf{D}$ and $\mathbf{D}_{\mathbf{P}}$, where $\mathbf{D}_{\mathbf{P}}$
is simply the representation of the detectors under the which-path
information context. Clearly we can always write down a joint probability
distribution for $\mathbf{P}$, $\mathbf{D}$, and $\mathbf{D}_{\mathbf{P}}$,
but then there is no contextuality in this system, and no need for
negative probabilities.

\subsection{Three-sided Firefly Box\label{sub:Three-sided-Firefly-Box}}

For the three-sided firefly box of Figure \ref{fig:Three-sided-Foulis-box},
Suppes and Zanotti proved a necessary and sufficient condition for
the existence of a joint probability distribution, namely that $\mathbf{A}$,
$\mathbf{B}$, and $\mathbf{C}$ need to satisfy the following inequalities:
\begin{eqnarray}
-1 & \leq & E\left(\mathbf{AB}\right)+E\left(\mathbf{BC}\right)+E\left(\mathbf{AC}\right)\label{eq:suppes-zanotti}\\
 & \leq & 1+2\min\left\{ E\left(\mathbf{AB}\right),E\left(\mathbf{BC}\right),E\left(\mathbf{AC}\right)\right\} .\nonumber 
\end{eqnarray}
This example actually shows up in physics, and a weaker form of
inequalities (\ref{eq:suppes-zanotti}) are known in the physics literature
as the Leggett-Garg inequalities. If we restrict our variables to
consistently connected systems, it is straightforward to compute a
(negative) joint probability distribution consistent with expectations
violating (\ref{eq:suppes-zanotti}). In fact, imagine we have the
following moments
\begin{eqnarray*}
E\mathbf{\left(AB\right)} & = & \epsilon_{1},\\
E\mathbf{\left(BC\right)} & = & \epsilon_{2},\\
E\mathbf{\left(AC\right)} & = & \epsilon_{3}.
\end{eqnarray*}
To make the computations simpler, let us also assume that $E\left(\mathbf{A}\right)=E\left(\mathbf{B}\right)=E\left(\mathbf{C}\right)=0$.
Then, we can construct a (negative) probability space $\left(\Omega,\mathcal{F},P\right)$
with $\Omega=\left\{ \omega_{abc},\omega_{\overline{a}bc},\omega_{a\overline{b}c},\omega_{ab\overline{c}},\omega_{a\overline{b}\overline{c}},\omega_{\overline{a}b\overline{c}},\omega_{\overline{a}\overline{b}c},\omega_{\overline{a}\overline{b}\overline{c}}\right\} $,
and a $P$ satisfying the above marginals is given by 
\begin{eqnarray*}
p\left(\omega_{abc}\right) & = & \frac{1}{4}\left(1+\epsilon_{1}+\epsilon_{2}+\epsilon_{3}\right)-\alpha,\\
p\left(\omega_{\overline{a}bc}\right) & = & \frac{1}{4}\left(\alpha-\epsilon_{1}-\epsilon_{2}\right),\\
p\left(\omega_{a\overline{b}c}\right) & = & \frac{1}{8}\left(-1+\epsilon_{1}-\epsilon_{2}+\epsilon_{3}\right),\\
p\left(\omega_{ab\overline{c}}\right) & = & \frac{1}{8}\left(1+\epsilon_{1}-\epsilon_{2}-\epsilon_{3}\right),\\
p\left(\omega_{a\overline{b}\overline{c}}\right) & = & \frac{1}{4}\left(1+\epsilon_{3}\right)-\alpha,\\
p\left(\omega_{\overline{a}b\overline{c}}\right) & = & \frac{1}{8}\left(1-\epsilon_{1}+\epsilon_{2}-\epsilon_{3}\right),\\
p\left(\omega_{\overline{a}\overline{b}c}\right) & = & \frac{1}{8}\left(1+\epsilon_{1}-\epsilon_{2}-\epsilon_{3}\right),\\
p\left(\omega_{\overline{a}\overline{b}\overline{c}}\right) & = & \alpha,
\end{eqnarray*}
where $\alpha$ is a free parameter that takes a range of values given
by the moments $\epsilon_{1}$, $\epsilon_{2}$, and $\epsilon_{3}$
and by the minimization of the L1 norm. Notice that if the moments
violate (\ref{eq:suppes-zanotti}), then some of the probabilities
above will be negative, regardless of the values of $\alpha$, as
we should expect. 

To see what further information negative probabilities may provide,
we follow an example from \cite{de_barros_decision_2014,de_barros_quantum_2015}.
Imagine a decision-maker, Deana, who wants to invest in stocks. She
considers three companies, A, B, and C, about which she knows nothing.
In a wise move, Deana hires three ``experts,'' Alice, Bob, and Carlos,
to provide her with information about the companies. However, each
expert is specialized only in two of the companies, but not in all
(e.g. Alice knows a lot about A and B, but nothing about C). Imagine
now that the $\pm1$-valued random variables, $\mathbf{A}$, $\mathbf{B}$,
and $\mathbf{C}$, are supposed to model the experts' beliefs of a
stock value going up if $+1$ and down if $-1$ whenever asked about
it. We assume that our experts' opinions about each company A, B,
or C are consistently connected, i.e. they all agree about the expectations
of $\mathbf{A}$, $\mathbf{B}$, and $\mathbf{C}$. To make it simple
for our toy example, we set 
\begin{equation}
E\left(\mathbf{A}\right)=E\left(\mathbf{B}\right)=E\left(\mathbf{C}\right)=0.\label{eq:exp-xyz}
\end{equation}
Since Alice only knows about A and B, she can add to (\ref{eq:exp-xyz})
information about the second moment, and she claims 
\begin{equation}
E_{A}\left(\mathbf{AB}\right)=-1,\label{eq:xy}
\end{equation}
where we use the subscript $A$ to remind us that (\ref{eq:xy}) corresponds
to Alice's subjective belief\footnote{Our example is not easily translatable into objective probabilities,
but one could devise a situation where certain biases on the experts
sides could increase their assessment of second moments, thus recreating
the moments we use. }. Equation (\ref{eq:xy}) has a simple interpretation: Alice believes
that if the value of A goes up, B will certainly go down and vice
versa. Bob's and Carlos's beliefs are that 
\begin{equation}
E_{B}\left(\mathbf{AC}\right)=-\frac{1}{2},\label{eq:xz}
\end{equation}
and 
\begin{equation}
E_{C}\left(\mathbf{BC}\right)=0.\label{eq:yz}
\end{equation}
It is easy to see from (\ref{eq:suppes-zanotti}) that (\ref{eq:xy})--(\ref{eq:yz})
do not have a proper joint probability distribution. However, because
the random variables are consistently connected, there exists a negative
probability distribution consistent with (\ref{eq:exp-xyz})--(\ref{eq:yz}).

What can Deana do with her inconsistent expert information? The only
unknown to Deanna, in a certain sense, is the triple moment. The minimization
of the L1 norm provides a range of possible values for the triple
moments, namely, for the above expectations, 
\[
-\frac{1}{2}\leq E\left(\mathbf{XYZ}\right)\leq\frac{1}{2}.
\]
So, NP provide a range of possible values for the triple moment that
could be thought as the most reasonable range, given that the minimization
of L1 puts the negative measure as close to a proper probability distribution
as possible.

\subsection{Kochen-Specker Theorem\label{sub:Kochen-Specker-Theorem}}

Very early on, a heated discussion in the foundations of quantum mechanics
was whether the process of measuring a quantum system revealed the
actual value of a property or created it. To answer this question,
Kochen and Specker \cite{kochen_problem_1967} asked whether it was
possible to assign values $0$ or $1$ to a set of quantum properties
(corresponding to projection operators, the quantum equivalent to
yes/no measurements). If measurements revealed a property, then this
assignment of 0 and 1 values should be possible, but Kochen and Specker
showed this was not the case. To do so, they used $117$ projection
operators (projectors). However, a simpler proof with only $18$ projectors
in a four dimensional Hilbert space exists, and that form is followed
here. \cite{cabello_bell-kochen-specker_1996}. Let $P_{i}$ be a
collection of projectors, and let $\mathbf{V}_{i}$ be $\pm1$-valued
random variables taking values $-1$ or $+1$ depending on whether
the property $P_{i}$ is false or true, respectively. Since $P_{i}$
is determined uniquely by a vector in the Hilbert space, we use this
vector as the index $i$ for the projector and the random variable.
Consider the following set of equations, guaranteed to be satisfied
by the algebra of the chosen projection operators. 

\begin{align}
\mathbf{V}_{0,0,0,1}\mathbf{V}_{0,0,1,0}\mathbf{V}_{1,1,0,0}\mathbf{V}_{1,-1,0,0} & =-1,\label{eq:cabello-1st}\\
\mathbf{V}_{0,0,0,1}\mathbf{V}_{0,1,0,0}\mathbf{V}_{1,0,1,0}\mathbf{V}_{1,0,-1,0} & =-1,\\
\mathbf{V}_{1,-1,1,-1}\mathbf{V}_{1,-1,-1,1}\mathbf{V}_{1,1,0,0}\mathbf{V}_{0,0,1,1} & =-1,\\
\mathbf{V}_{1,-1,1,-1}\mathbf{V}_{1,1,1,1}\mathbf{V}_{1,0,-1,0}\mathbf{V}_{0,1,0,-1} & =-1,\\
\mathbf{V}_{0,0,1,0}\mathbf{V}_{0,1,0,0}\mathbf{V}_{1,0,0,1}\mathbf{V}_{1,0,0,-1} & =-1,\\
\mathbf{V}_{1,-1,-1,1}\mathbf{V}_{1,1,1,1}\mathbf{V}_{1,0,0,-1}\mathbf{V}_{0,1,-1,0} & =-1,\\
\mathbf{V}_{1,1,-1,1}\mathbf{V}_{1,1,1,-1}\mathbf{V}_{1,-1,0,0}\mathbf{V}_{0,0,1,1} & =-1,\\
\mathbf{V}_{1,1,-1,1}\mathbf{V}_{-1,1,1,1}\mathbf{V}_{1,0,1,0}\mathbf{V}_{0,1,0,-1} & =-1,\\
\mathbf{V}_{1,1,1,-1}\mathbf{V}_{-1,1,1,1}\mathbf{V}_{1,0,0,1}\mathbf{V}_{0,1,-1,0} & =-1.\label{eq:cabello-last}
\end{align}
A quick examination will reveal that the r.v.'s on each line correspond
to a set of commuting projectors. Because the $P_{i}$ in each line
are orthogonal, only one of the $\mathbf{V}_{i}$'s in each line can
be true at a time, and therefore the product of them must be $-1$.
The commutation of observables for each line means that each corresponding
random variable can be measured simultaneously, though this is not
true for all random variables in different lines. We can think of
each line as representing a particular context for the experiment. 

We can multiply the left hand side of (\ref{eq:cabello-1st})--(\ref{eq:cabello-last}),
and because each variable appears twice, their product must be one
(since $\mathbf{V}_{i}^{2}=1$, because it is a $\pm1$-valued random
variable). However, if we multiply the right hand side of (\ref{eq:cabello-1st})--(\ref{eq:cabello-last}),
their product is $-1$, and we reach a contradiction. The contradiction
comes from assuming that the random variable (say, $\mathbf{V}_{0,0,0,1}$)
in one experimental context (i.e., measured with $\mathbf{V}_{0,0,1,0}$,
$\mathbf{V}_{1,1,0,0}$, $\mathbf{V}_{1,-1,0,0}$) is the same as
the random variable in a different context (i.e., $\mathbf{V}_{0,0,0,1}$
in the context $\mathbf{V}_{0,1,0,0}$, $\mathbf{V}_{1,0,1,0}$, $\mathbf{V}_{1,0,-1,0}$).
Since each of the value combinations for the $\mathbf{V}_{i}$'s correspond
to an $\omega$ in a (course-grained) probability space, it follows
that there is no joint probability distribution underlying it. Therefore
the algebra of observables in Quantum Mechanics is contextual.

It is worth mentioning that the lack of a joint probability for the
above example is a consequence of the algebra of observables being
state independent. What this means is that for any system describable
by a four-dimensional Hilbert space we will reach the above contradiction,
regardless of how this system was initially prepared. Assuming consistent
connectedness (i.e. that the marginal expectations $\left\langle \mathbf{V}_{i}\right\rangle $
match between contexts), it is possible to find a negative probability
distribution that describes this system. However, such distributions
are quite large, consisting of signed probabilities for $2^{18}=262,144$
elementary events.

\subsection{Bell-EPR non-local contextuality}

Perhaps the most celebrated example of contextuality in QM is the
Bell-EPR thought experiment, which we present here in terms of random
variables. In this experiment, two spin-$1/2$ particles A and B are
emitted by a source and go to opposite directions, where Alice and
Bob measure them (see Figure \ref{fig:Bell-EPR-experiment}).
\begin{figure}
\centering{}\includegraphics[scale=0.5]{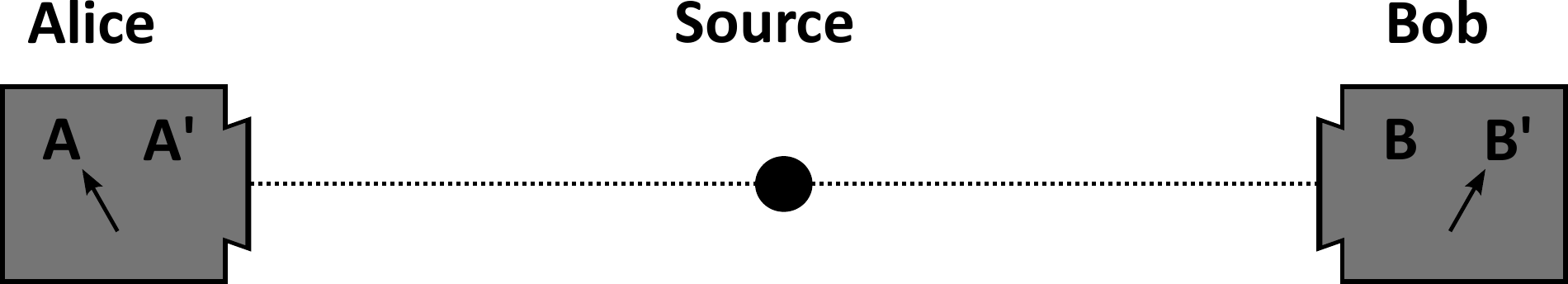}\protect\caption{\label{fig:Bell-EPR-experiment}Bell-EPR experiment. A source emits
two photons, one toward Alice's lab and another toward Bob's. Each
experimenter can make a decision on which direction of spin to measure,
represented in the figure by the settings $A$ and $A'$ for Alice
and $B$ and $B'$ for Bob. Outcomes of measurements are $\pm1$,
with equal probabilities. }
\end{figure}
 One of the possible states that can be prepared for such a source
is 
\begin{equation}
|\psi\rangle=\frac{1}{\sqrt{2}}\left(|+-\rangle-|-+\rangle\right),\label{eq:psi-EPR}
\end{equation}
where $|+-\rangle$ ($|-+\rangle$) corresponds to A having spin polarization
``$+1$'' (``$-1$'') and B ``$-1$'' (``$+1$'') in the $\mathbf{z}$
direction\footnote{The choice of the $\mathbf{z}$ direction is arbitrary. For simplicity
we use units where $\hbar/2=1$.}. It is clear that their spin $\mathbf{z}$ is negatively correlated,
as a ``$+1$'' or ``$-1$'' outcomes for particle A will result
in the same for particle B. Therefore, if Alice measures A's spin
in the $\mathbf{z}$ direction, then Bob's measurement in the same
direction is moot: his experimental outcomes are already determined
by Alice's. Einstein, Podolsky, and Rosen used this example to argue
that QM is incomplete: if we can know something (they called it an
``element of reality'') about particle B without affecting it (since
we measured A, whose measurement may be separated by spacelike interval
from Bob's own measurement), then the assumption in QM that the vector
(\ref{eq:psi-EPR}) is a complete description of its physical state
is incorrect \cite{einstein_can_1935}. 

We can argue that there is still no mystery with QM up to now, but
just an argument that QM should be incomplete. Einstein's proposal
was to search for a more complete theory (often called a hidden-variable
theory), whereas Bohr defended that no such theory could be satisfactorily
produced. However, things become more interesting when, following
Bell, we use angles that are different from simply measuring vertical
polarization (e.g. combinations of other directions). In the 1960's,
John Bell showed that (local) hidden-variable theories were incompatible
with the predictions of QM. Stating in the formalism we put forth,
Bell showed that if QM is correct, then some random-variables variables
representing the outcomes of spacelike-separated experiments are contextual.
About a decade later, Aspect, Grangier, and Gérard \cite{aspect_experimental_1981}
provided the first evidence that QM was correct, and recent (loophole-free)
experiments seem to corroborate their conclusions \cite{hensen_experimental_2015}. 

Bell's result comes out of the construction of a simple random variable
$\mathbf{S}$ defined as 
\begin{equation}
\mathbf{S}=\mathbf{AB}+\mathbf{A}'\mathbf{B}+\mathbf{A}\mathbf{B}'-\mathbf{A}'\mathbf{B}',\label{eq:S}
\end{equation}
where $\mathbf{A}$, $\mathbf{A}'$, $\mathbf{B}$, and $\mathbf{B}'$
are $\pm1$-valued random variables corresponding to outcomes of experiments
for Alice and Bob, with the prime denoting different spin-measurement
angles. It is straightforward to check that $\mathbf{S}$ can take
values $-2$ or $2$ (to verify this, one can make a table with all
$16$ possible values for $\mathbf{A}$, $\mathbf{A}'$, $\mathbf{B}$,
and $\mathbf{B}'$ and compute $\mathbf{S}$). Therefore, the expected
value of $\mathbf{S}$ must be between $-2$ and $2$, and we obtain
the Clauser-Horne-Shimony-Holt (CHSH) inequalities (permutations of
the minus sign gives you the other ones) \cite{clauser_proposed_1969}
\begin{equation}
-2\leq\left\langle \mathbf{AB}\right\rangle +\left\langle \mathbf{A}'\mathbf{B}\right\rangle +\left\langle \mathbf{A}\mathbf{B}'\right\rangle -\left\langle \mathbf{A}'\mathbf{B}'\right\rangle \leq2,\label{eq:CHSH}
\end{equation}
where here we introduce a shorter standard notation for expectation,
i.e. $E\left(\cdot\right)=\left\langle \cdot\right\rangle $. As in
the above example for the firefly box, if (\ref{eq:CHSH}) is violated,
there is no joint probability distribution, and the system of random
variables is contextual. 

The Bell-EPR setup differs significantly from the Kochen-Specker.
The random variables in Bell-EPR are necessarily consistently connected.
If they were not, it would be possible to used EPR-type correlated
systems to communicate superluminally: a choice of measurement direction
by Alice would instantly affect the mean value of Bob's measurements,
and she could use entangled particles to communicate with Bob. This
would be incompatible with the causal structure of special relativity,
and would require a complete rethinking of relativistic physics. Thus,
the absence of a joint probability distribution comes from the (non-trivial)
correlations imposed by the experimental outcomes (through the values
of the moments). But, more importantly, the Bell-EPR case provides
a situation where two parts of a system are correlated in ways that
cannot be explained by the existence of a common cause (hidden-variable)
because they are contextual. This is particularly disturbing to the
physicist because those two parts may be arbitrarily far away from
each other, and the events that are correlated may be spacelike separated.
A striking way to see how this is difficult to understand is if we
look at a firefly box-like construction for the Bell-EPR variables.
We will not attempt to do this here, as it would be lengthy, but we
refer the interested reader to an interesting paper by Blasiak \cite{blasiak_classical_2015}. 

Once again, a general solution may be obtained for the joint moments
in (\ref{eq:CHSH}), and we have the following (maybe negative, depending
on whether (\ref{eq:CHSH}) is violated or not) joint probability
distribution:
\begin{eqnarray*}
p\left(\omega_{aa'bb'}\right) & = & \frac{1}{4}\left(\left\langle \mathbf{AB}\right\rangle +\left\langle \mathbf{A}'\mathbf{B}\right\rangle +\left\langle \mathbf{A}\mathbf{B}'\right\rangle +\left\langle \mathbf{A}'\mathbf{B}'\right\rangle \right)+\alpha_{3}+\alpha_{4}-\alpha_{7}\\
\\
p\left(\omega_{aa'bb'}\right) & = & \frac{1}{4}\left(\left\langle \mathbf{AB}\right\rangle +\left\langle \mathbf{A}'\mathbf{B}\right\rangle \right)+\alpha_{3}+\alpha_{4}-\alpha_{7},\\
p\left(\omega_{aa'b\overline{b'}}\right) & = & \alpha_{7},\\
p\left(\omega_{aa'\overline{b}b'}\right) & = & -\frac{1}{4}\left(\left\langle \mathbf{AB}\right\rangle +\left\langle \mathbf{A}'\mathbf{B}\right\rangle -\left\langle \mathbf{A}\mathbf{B}'\right\rangle -\left\langle \mathbf{A}'\mathbf{B}'\right\rangle \right)+\alpha_{2}-\alpha_{3}+\alpha_{7},\\
p\left(\omega_{aa'\overline{b}\overline{b'}}\right) & = & -\frac{1}{4}\left(\left\langle \mathbf{A}\mathbf{B}'\right\rangle +\left\langle \mathbf{A}'\mathbf{B}'\right\rangle \right)+\alpha_{1}+\alpha_{3}-\alpha_{7},\\
p\left(\omega_{a\overline{a'}bb'}\right) & = & \frac{1}{4}\left(1-\left\langle \mathbf{A}'\mathbf{B}\right\rangle \right)-\alpha_{3}-\alpha_{4}-\alpha_{6},\\
p\left(\omega_{a\overline{a'}b\overline{b'}}\right) & = & \alpha_{6},\\
p\left(\omega_{a\overline{a'}\overline{b}b'}\right) & = & \frac{1}{4}\left(\left\langle \mathbf{A}'\mathbf{B}\right\rangle -\left\langle \mathbf{A}'\mathbf{B}'\right\rangle \right)-\alpha_{2}+\alpha_{3}+\alpha_{6},\\
p\left(\omega_{a\overline{a'}\overline{b}\overline{b'}}\right) & = & \frac{1}{4}\left(1+\left\langle \mathbf{A}'\mathbf{B}'\right\rangle \right)-\alpha_{1}-\alpha_{3}-\alpha_{6},\\
p\left(\omega_{\overline{a}a'bb'}\right) & = & -\frac{1}{4}\left(\left\langle \mathbf{AB}\right\rangle +\left\langle \mathbf{A}\mathbf{B}'\right\rangle \right)+\alpha_{1}-\alpha_{4}+\alpha_{5},\\
p\left(\omega_{\overline{a}a'b\overline{b'}}\right) & = & \frac{1}{4}\left(1+\left\langle \mathbf{A}\mathbf{B}'\right\rangle \right)-\alpha_{1}-\alpha_{3}-\alpha_{5},\\
p\left(\omega_{\overline{a}a'\overline{b}b'}\right) & = & \frac{1}{4}\left(1+\left\langle \mathbf{AB}\right\rangle \right)-\alpha_{1}-\alpha_{2}-\alpha_{5},\\
p\left(\omega_{\overline{a}a'\overline{b}\overline{b'}}\right) & = & \alpha_{5},\\
p\left(\omega_{\overline{a}\overline{a'}bb'}\right) & = & \alpha_{4},\\
p\left(\omega_{\overline{a}\overline{a'}b\overline{b'}}\right) & = & \alpha_{3},\\
p\left(\omega_{\overline{a}\overline{a'}\overline{b}b'}\right) & = & \alpha_{2},\\
p\left(\omega_{\overline{a}\overline{a'}\overline{b}\overline{b'}}\right) & = & \alpha_{1},
\end{eqnarray*}
where $\alpha_{i}$ are free parameters. Once again, the ranges of
$\alpha_{i}$ depend on the values of the moments, but one point is
relevant here. On the $\mathbf{A}$, $\mathbf{B}$, $\mathbf{C}$
example, we had only one free parameter, while here we have seven.
The reason is that in the Bell-EPR setup, only the four individual
expectations and four moments are given, and together with the requirement
that $\sum p\left(\omega_{i}\right)=1$ this amounts to 9 equations
for sixteen elementary events, thus it is a more underdetermined case.

\subsection{Quantum contextuality in psychology}

We did not try to give an exhaustive list of all contextual systems
in QM, but mainly those which provide further conceptual understanding
of the difficulties faced by physicists trying to understand quantum
theory. We present those examples to provide a background for the
discussion of contextuality in psychology, a theme that is at the
core of current attempts to use the mathematics of QM to model cognition.
Once again, we will not try to give an exhaustive account of all different
contextual cases, and the interested reader is referred to Busemeyer
and Bruza's book \cite{busemeyer_quantum_2012}. Here we briefly examine
a few cases that exemplify quantum-like contexts in psychology: violations
of the sure-thing-principle \cite{aerts_quantum_2009,khrennikov_quantum_2009}
in decision making and order effects \cite{wang_context_2014}. 

Savage's Sure-Thing-Principle was stated the following way \cite[pg. 21]{savage_foundations_1972}: 
\begin{quotation}
``A businessman contemplates buying a certain piece of property.
He considers the outcome of the next presidential election relevant
to the attractiveness of the purchase. So, to clarify the matter for
himself, he asks whether he should buy if he knew that the Republican
candidate were going to win, and decides that he would do so. Similarly,
he considers whether he would buy if he knew that the Democratic candidate
were going to win, and again finds that he would do so. Seeing that
he would buy in either event, he decides that he should buy, even
though he does not know which event obtains, or will obtain, as we
would ordinarily say. It is all too seldom that a decision can be
arrived at on the basis of the principle used by this businessman,
but, except possibly for the assumption of simple ordering, I know
of no other extralogical principle governing decisions that finds
such ready acceptance.''
\end{quotation}
For example, imagine you have $\mathbf{B}$ and $\mathbf{P}$ as a
$\pm1$-valued random variables corresponding to ``not buy'' ($\mathbf{B}=-1$)
or ``buy'' ($\mathbf{B}=+1$), and ``Republican president'' ($\mathbf{P}=-1$)
or ``Democrat president'' ($\mathbf{P}=+1$). The STP corresponds
to the probabilistic statement that 
\begin{eqnarray*}
P\left(\mathbf{B}=1\right) & = & P\left(\mathbf{B}=1|\mathbf{P}=1\right)P\left(\mathbf{P}=1\right)+P\left(\mathbf{B}=1|\mathbf{P}=-1\right)P\left(\mathbf{P}=-1\right)\\
 & \geq & P\left(\mathbf{B}=-1|\mathbf{P}=1\right)P\left(\mathbf{P}=1\right)+P\left(\mathbf{B}=-1|\mathbf{P}=-1\right)P\left(\mathbf{P}=-1\right)\\
 & = & P\left(\mathbf{B}=-1\right),
\end{eqnarray*}
if 
\[
P\left(\mathbf{B}=1|\mathbf{P}=1\right)\geq P\left(\mathbf{B}=-1|\mathbf{P}=1\right)
\]
and 
\[
P\left(\mathbf{B}=1|\mathbf{P}=1\right)\geq P\left(\mathbf{B}=-1|\mathbf{P}=1\right).
\]

Tversky and Shafir showed that human decision makers often do not
follow the STP \cite{shafir_thinking_1992,tversky_disjunction_1992}.
Since STP follows in a straightforward way from the axioms of probability
theory, violations of STP by human decision makers imply they do not
follow those axioms themselves, but perhaps some type of generalized
probability theory. Such generalized probability theory, as some have
proposed, is the one given by probabilities defined over an orthomodular
lattice resulting from measures over a Hilbert space, i.e., quantum
probabilities \cite{blutner_quantum_2015}. 

For example, the STP can be given by a quantum description of the
MZI paradigm \cite{de_barros_examples_2015}. In the MZI, where which
path information causes a collapse of the wave function, therefore
changing the probability distributions of the outcomes of the experiment.
So, if we use the analogy that in the MZI the responses ``buy''
or ``not buy'' correspond to detectors at the end of the interferometer,
and the which-path information corresponding to ``Republican president''
or ``Democrat president,'' the collapse of the wave function would
change the distributions of ``buy'' or ``not buy'' depending on
the context of knowing which is president, similar to the Tversky
and Shafir's experiments. Therefore, violations of STP show a clear
case of contextuality. However, it is obviously trivial contextuality,
since the ``measurement'' of which-path creates a direct change
in the expectation values of the ``buy''/``not buy'' random variable. 

We now turn to order effects. Order effects are well-known in quantum
systems, where successive measurements of incompatible quantities
(e.g. spin in two orthogonal directions) give different results depending
on the order. Recently, in a model similar to the quantum model for
the MZI, Wang et al. \cite{wang_context_2014}, showed that not only
can quantum models correctly reproduce the observed order effect of
outcomes of many different experiments, but they can also predict
a non-trivial relation for the order effect: the QQ equality. This
equality, which holds exactly for the quantum formalism, seems to
also hold with good fit for most order effect experiments investigated
by Wang et al., a surprising finding, since it seems the QQ equality
cannot be derived in any straightforward way from other approaches.
However, as in the STP example, the random variables are inconsistently
connected.

\section{Final remarks\label{sec:Final-remarks}}

In this paper we described negative probabilities, and showed how
they can be used to describe some contextual systems. We tried to
show in the examples some of the cases where negative probabilities
work well, but also those where no clear approach with negative probabilities
exist (i.e. for inconsistently connected systems). Our goal was to
provide a different approach to contextual systems than the formalism
of Quantum Mechanics, one that may perhaps be useful in quantum cognition.
The advantage of NP is that it can model not only those situations
where QM is applied, but it is also more general. 

As an example, let us think about the three-sided firefly box. In
QM, if we have three observables that can be observed simultaneously
in pairs, it follows that they can also be observed all together.
This is a characteristic of the Hilbert space formalism, and can be
easily demonstrated (see \cite{de_barros_joint_2012,de_barros_beyond_2015}).
However, it is also possible to show that, under certain reasonable
assumptions, one should expect a neural stimulus-response model to
be able to reproduce the types of correlations that we find in the
three random variable case, where no joint probability distribution
exists \cite{de_barros_quantum-like_2012,de_barros_quantum_2015}.
Thus, we are left with the possibility of a plausible contextual system
that is forbidden by the quantum formalism and that can easily be
described by NP, as we saw in Section \ref{sec:Some-examples}. Additionally,
as mentioned earlier, there are many surprising theorems in QM that
seem to have no counterparts in psychology or social sciences, and
a more general contextual theory of probabilities might be advantageous.

We should point out that despite all the discussions about contextuality
in social systems, recently Dzhafarov, Zhang, and Kujala analyzed
many psychology experiments, and found no evidence of non-trivial
contextuality \cite{dzhafarov_is_2015}. This means that more subtle
examples, such as the firefly box or systems equivalent to the Bell-EPR
where contextuality comes from the correlations and not from inconsistently
connected random variables, were not found. Their analysis was made
using the apparatus of Contextuality-by-Default, an approach that
is more general than the NP. This does not mean that NP are not necessarily
useful in the social sciences, but it seems that up to now attempts
to find non-trivially contextual systems have failed. 

As we saw in the examples, as well as in the discussions that followed
Proposition \ref{prop:m=00003D1impliesjoint}, the minimum value of
the L1 norm, $M^{*}$, can be interpreted as a measure of contextuality
\cite{de_barros_measuring_2014}. This is connected to standard views
in QM, where the values of $\left\langle \mathbf{S}\right\rangle $
(equation (\ref{eq:S})) are taken as a measure of departure from
locality for Bell-EPR systems, with higher values of $\left\langle \mathbf{S}\right\rangle $
corresponding to more non-local systems (therefore more contextual).
This is also true for the three random variable system $\mathbf{A}$,
$\mathbf{B}$, and $\mathbf{C}$, where $M^{*}$ is associated to
the expectation of $\mathbf{AB}+\mathbf{BC}+\mathbf{AC}$ present
in equation (\ref{eq:suppes-zanotti}). It would be interesting to
see how $M^{*}$ compares to other measures of contextuality, namely
the one given by Contextuality-by-Default, for more complex systems,
and whether interesting classifications can arise from different measures
of contextuality.

\paragraph{Acknowledgments. }

The authors are indebted to Ehtibar Dzhafarov for many discussions
on contextuality and to comments on this manuscript. JAB and GO would
like to acknowledge the influence and support of Patrick Suppes on
developing the theory of NP, which he became fond of in the last years
before his death. 

\bibliographystyle{plain}
\bibliography{Quantum}

\end{document}